\newtheorem{theo}{Theorem}[section]
\newtheorem{defn}[theo]{Definition}
\newtheorem{lem}[theo]{Lemma}
\newtheorem{prop}[theo]{Proposition}
\newtheorem{cor}[theo]{Corollary}
\newtheorem{lemma}[theo]{Lemma}
\newtheorem{example}[theo]{Example}
\newtheorem{remark}[theo]{Remark}
\newcommand{\ad}{^{\mbox{\scriptsize $\dag$}}}
\newcommand{\up}{\raisebox{0.7mm}{$\upharpoonright \,$}}%
\newcommand{\be}{\begin{equation}}
\newcommand{\en}{\end{equation}}
\newcommand{\bea}{\begin{eqnarray}}
\newcommand{\ena}{\end{eqnarray}}
\newcommand{\beano}{\begin{eqnarray*}}
\newcommand{\enano}{\end{eqnarray*}}
\newcommand{\bee}{\begin{enumerate}}
\newcommand{\ene}{\end{enumerate}}
\newcommand{\bedefi}{\begin{defn}$\!\!${\bf }$\;$\rm }
\newcommand{\findefi}{\end{defn}}
\def\A{{\mathfrak A}}
\def\D{{\mathcal D}}
\def\H{{\mathcal H}}
\def\L{{\mathcal L}}
\def\R{{\mathcal R}}
\def\J{\relax\ifmmode {\mathcal J}\else${\mathcal J}$\fi}
\def\x{\relax\ifmmode {\mbox{*}}\else*\fi}
\def\BB{{\mathfrak B}}
\newcommand{\mc}{\mathcal}
\newcommand{\mb}{\mathbb}
\newcommand{\mult}{{\scriptstyle \Box}}
\newcommand{\ha}{^{\rm\textstyle *}}
\newcommand{\pa}{partial \mbox{*-algebra}}
\newcommand{\LDH}{{\L}\ad(\D,\H)}
\def\dag{\dagger}
\newcommand{\vp}{\varphi}
\newcommand{\ip}[2]{\left\langle{#1}\right|\left.{#2}\right\rangle}
\newcommand{\Dr}{\D_\pi}
\newcommand{\ze}{_{\scriptscriptstyle 0}}
\newcommand{\betheo}{\begin{theo}}
\newcommand{\entheo}{\end{theo}}
\newcommand{\becor}{\begin{cor}}
\newcommand{\encor}{\end{cor}}
\newcommand{\belem}{\begin{lem} }
\newcommand{\enlem}{\end{lem}}
\newcommand{\beprop}{\begin{prop} }
\newcommand{\enprop}{\end{prop}}
\newcommand{\beex}{\begin{example}$\!\!\!${\bf} \rm }
\newcommand{\enex}{ \end{example}}
\newcommand{\berem}{\begin{remark}$\!\!\!${\bf} \rm }
\newcommand{\enrem}{ \end{remark}}
\begin{document}
\title[Representable functionals]{Representable linear functionals\\ on partial *-algebras}

\date{\today}
\author{F. Bagarello}
\address{Dipartimento di Metodi e Modelli Matematici,
Fac. Ingegneria, Universit\`a di Palermo, I-90128  Palermo, Italy}
\email{bagarell@unipa.it}

\author{A. Inoue}
\address{
Department of Applied Mathematics, Fukuoka University, Fukuoka
814-0180, Japan}
\email{a-inoue@fukuoka-u.ac.jp}
\author{C. Trapani} \address{Dipartimento di Matematica ed
Applicazioni, Universit\`a di Palermo, I-90123 Palermo
Italy}
\email{
trapani@unipa.it}

\begin{abstract}
A GNS - like *-representation of a \pa\ $\A$ defined by certain representable linear functionals on $\A$ is constructed.
The study of the interplay with the GNS construction associated with invariant positive sesquilinear forms (ips) leads to the notions of pre-core and of singular form. It is shown that a positive sesquilinear form with pre-core always decomposes into the sum of an ips form and a singular one.
\end{abstract}

\maketitle

\section{Introduction and Preliminaries}

The Gelfand - Naimark - Segal (GNS) representation plays, as it is well known, a key role in the study of the structure of topological *-algebras and it is important in many physical applications. Since from the very beginning when partial *-algebras and quasi *-algebras were studied, it was clear that an extension of the GNS construction was needed. The most natural solution consisted in considering, as starting point, certain positive sesquilinear forms, called ips (invariant positive sesquilinear) forms and biweights, since they allowed to by-pass the lack of a noneverywhere defined multiplication. In this framework, and following the different axioms that were introduced from time to time, several types of GNS-like representations were constructed taking their values in partial *-algebras of unbounded operators (partial O*-algebras) or in families of operators in  partial inner product ({\sc PIP}) spaces . We refer to the monographs \cite{ait_book, at_pipbook} for an overview and for complete references on this subject.

However, in many situations it is more interesting, and sometimes more natural,  to consider the possibility that a linear functional $\omega$ on a \pa\ $\A$, positive in some sense, could be taken as basic ingredient of the construction. This is the case, for instance, of applications to quantum theories, where linear functionals describe equilibrium states at given temperature (Gibbs states for finite systems, KMS states for infinite systems).
We will show (Section \ref{sect_GNS}) that a GNS-like construction is, indeed, possible starting from certain linear functionals on $\A$, called {\em representable}.

In Section \ref{sect_biwips} we consider the interplay between representable linear functionals and ips forms (or biweights). For this aim, we introduce the notion of {\em pre-core} for a positive sesquilinear form $\vp$ on $\A \times \A$ ($\A$ a semi-associative \pa\ with unit) and we show that if a pre-core exists, there is a representable linear functional $\omega_\vp$ which is associated to it in natural way. The construction of Section \ref{sect_GNS} can then be performed, even though $\vp$ is {\em not}, in general, an ips form on $\A$ and so a GNS representation starting directly from $\vp$ cannot be defined. This fact leads (Section \ref{sect_decom}) to the definition of {\em singular form} where the ``singularity'' consists, essentially, in the failure of a density condition, needed in the definition of ips form.
The outcome is a decomposition theorem, which can be viewed as the main result of this paper: every positive sesquilinear form $\vp$ with pre-core can  be decomposed into the sum of an ips form and of a {singular} form.
Finally, in Section \ref{sect_quasireg} we analyze the behavior of positive vector sesquilinear forms $\vp_\xi^\pi$ defined via a given *-representation $\pi$ of $\A$ and a vector $\xi \in \Dr$, the domain of $\pi$. In contrast with the case of *-algebras, such a form does not necessarily allow a GNS contruction. When this happens for every vector $\xi \in \Dr$ (the domain of $\pi$), we speak of a {\em quasi regular} *-representation. Conditions for the quasi regularity of a *-representation $\pi$ are given.

\medskip Throughout this paper we follow the definitions and notations given in \cite{ait_book}. We
simply recall that a \pa\ $\A$ is a complex vector space with conjugate linear involution  $\ha $
and a distributive partial multiplication $\cdot$, defined on a subset $\Gamma \subset \A \times \A$, satisfying
the property that $(x,y)\in \Gamma$ if, and only if, $(y\ha ,x\ha )\in   \Gamma$ and $(x\cdot y)\ha = y\ha \cdot x\ha $.
>From now on we will write simply $xy$ instead of $x\cdot y$ whenever $(x,y)\in \Gamma$. For every $y \in \A$, the
set of left  (resp. right) multipliers of $y$ is denoted by $L(y)$ (resp. $R(y)$),  i.e., $L(y)=\{x\in
\A:\, (x,y)\in \Gamma\}$. We denote by $L\A$ (resp. $R\A$)  the space of universal left (resp. right) multipliers
of $\A$.

In general a \pa\ is not associative, but in several situations a weaker form of associativity holds. More precisely, we say
that $\A$ is \emph{ semi-associative} if $b \in R(a)$ implies $by\in R(a)$, for every $y \in R\A$ and
 $$
(ab)y=a(by).
$$
We notice that, if $\A$ is semi-associative and $a\in L(b)$, then $x^*a \in L(by)$, for every $x,y \in R\A$ and
\begin{equation} \label{assoc} x^*(ab)y = (x^*(ab))y = (x^*a)(by), \quad \forall x,y \in R\A. \end{equation}
Furthermore, in this case, $R\A$ is an algebra.\\

\medskip Let $\H$ be a complex Hilbert space and $\D$ a dense subspace of $\H$.
 We denote by $ \L\ad(\D,\H) $
the set of all (closable) linear operators $X$ such that $ {\D}(X) = {\D},\; {\D}(X\x) \supseteq {\D}.$ The set $
\L\ad(\D,\H ) $ is a  \pa\
 with respect to the following operations: the usual sum $X_1 + X_2 $,
the scalar multiplication $\lambda X$, the involution $ X \mapsto X\ad = X\x \up {\D}$ and the \emph{ (weak)}
partial multiplication $X_1 \mult X_2 = {X_1}\ad\x X_2$, defined whenever $X_2$ is a weak right multiplier of
$X_1$ (we shall write $X_2 \in R^{\rm w}(X_1)$ or $X_1 \in L^{\rm w}(X_2)$), that is, iff $ X_2 {\D} \subset
{\D}({X_1}\ad\x)$ and  $ X_1\x {\D} \subset {\D}(X_2\x).$  $\LDH$ is neither associative nor semi-associative.

\medskip
A \emph{ *-representation} of a  \pa\ $\A$ in the
Hilbert space $\H$ is a linear map $\pi : \A \rightarrow\L\ad(\D,\H)$     such that: \begin{itemize}
\item[(i)] $\pi(a\x) = \pi(a)\ad$ for every $a \in \A$;
\item[(ii)] $a \in L(b)$
in $\A$ implies $\pi(a) \in L^{\rm w}(\pi(b))$ and $\pi(a) \mult\pi(b) = \pi(ab).$
\end{itemize}

The {\em closure} $\widetilde{\pi}$ of a *-representation $\pi$ is defined by $$ \widetilde{\pi}(a):= \overline{\pi(a)}\upharpoonright \widetilde{\Dr},$$
where $\widetilde{\Dr}$ is the completion of $\Dr$ under the {\em graph topology} defined on $\Dr$ by the family of seminorms
$$ \xi \to \|\pi(a)\xi\|, \; a \in \A.$$

The *-representation $\pi$ is called {\em closed} if $\Dr = \widetilde{\Dr}$ and {\em  fully closed} if $$
\Dr =\bigcap_{a\in \A}D(\overline{\pi(a)}).
$$

\section{GNS-like construction}\label{sect_GNS}

In this Section we will examine the possibility of a GNS-like construction in \pa s, starting from certain linear functionals. Some of these results generalize those given in \cite{ct_rep} for quasi *-algebras.

>From now on, we will assume that $\A$ is a {\em semi-associative} partial *algebra.

\betheo \label{GNS}
Let $\omega$ be a linear functional on $\A$, $\BB$ a subspace of $R\A$,  satisfying the following requirements:

(R1) $\omega(x^*x)\geq 0$ for all $x\in\BB$;

(R2) $\omega(y^*(a^*x))=\overline{\omega(x^*(ay))}$,
$\forall\,x,y\in\BB$, $a\in\A$;

(R3) $\forall a\in\A$ there exists $\gamma_a>0$ such that
$|\omega(a^*x)|\leq \gamma_a\,\omega(x^*x)^{1/2}$, for all $x \in \BB$.

Then there exists a triple $(\pi^\BB_{\omega}, \lambda^\BB_{\omega}, \H^\BB_{\omega})$ such that

\begin{itemize}
  \item[(a)] $\pi^\BB_{\omega}$ is  a *-representation of $\A$ in $\H_\omega$;
  \item[(b)] $\lambda^\BB_{\omega}$ is a linear map of $\A$ into
  $\H^\BB_{\omega}$ with $\lambda^\BB_{\omega}(\BB)=\D_{\pi^\BB_\omega}$ and
  $\pi^\BB_{\omega}(a)\lambda^\BB_{\omega}(x)=\lambda^\BB_{\omega}(ax)$, for every $a \in\A,\, x \in \BB$.
  \item[(c)] $\omega(y^*(ax))=\ip{\pi^\BB_{\omega}(a)\lambda^\BB_\omega(x)}{\lambda^\BB_\omega(y)}$,
  for every $a \in \A$, $x,y \in \BB$.
\end{itemize}
In particular, if $\A$ has a unit $e$ and $e \in \BB$, we have:
\begin{itemize}
  \item[(a$_1$)] $\pi^\BB_{\omega}$ is a cyclic *-representation of $\A$ with cyclic vector $\xi_\omega$;
  \item[(b$_1$)] $\lambda^\BB_{\omega}$ is a linear map of $\A$ into
  $\H^\BB_{\omega}$ with $\lambda^\BB_{\omega}(\BB)=\D_{\pi^\BB_\omega}$, $\xi_\omega=\lambda^\BB_{\omega}(e)$ and
  $\pi^\BB_{\omega}(a)\lambda^\BB_{\omega}(x)=\lambda^\BB_{\omega}(ax)$, for every $a \in\A,\, x \in \BB$.
  \item[(c$_1$)] $\omega(a)=\ip{\pi^\BB_{\omega}(a)\xi_\omega}{\xi_\omega}$,
  for every $a \in \A$.
\end{itemize}
\entheo
\begin{proof}
We define $N_\omega= \{ x \in \BB:  \, \omega(x^*x)=0\}$. Then (R3) implies that
    $$N_\omega =\{x \in \BB;\, \omega(y^*x)=0, \;\forall y \in \BB\},$$ so that $N_\omega$ is a subspace of $\BB$. The quotient
    $\BB/N_\omega\equiv\{\lambda_\omega^0(x):=x+N_\omega; x \in \BB\}$ is a pre-Hilbert space with inner product
    $$ \ip{\lambda_\omega^0(x)}{\lambda_\omega^0(y)}= \omega(y^*x),
    \quad x,y \in \BB.$$
     Let $\H_\omega$ be the completion of
    $\lambda_\omega^0(\BB)$.\\
    If $a\in \A$, we put $a^\omega(\lambda_\omega^0(x))=
    \omega(a^*x)$. Then, by (R3), it follows that $a^\omega$ is a
    well defined linear functional on $\lambda_\omega^0(\BB)$ and we
    have
    $$ |a^\omega(\lambda_\omega^0(x))|= |\omega(a^*x)| \leq \gamma_a
    \omega(x^*x)^{1/2}= \gamma_a \|\lambda_\omega^0(x)\|, \quad
    \forall x \in \BB .$$
    Thus,  $a^\omega$ extends to a continuous linear functional on $\H^\BB_{\omega}$ and so,
  by Riesz's lemma, there exists a unique $\xi_a \in
    \H^\BB_{\omega}$ such that
    \begin{equation} \label{eqn_defn_lambda} a^\omega(\lambda_\omega^0(x))=
    \ip{\lambda_\omega^0(x)}{\xi_a}, \quad \forall x \in\BB.\end{equation}

Now we put
\begin{equation} \label{eqn_defn_lambda2} \lambda^\BB_{\omega} (a)=\xi_a, \quad a \in \A.\end{equation}
Then $\lambda^\BB_{\omega}$ is a linear map from $\A$ into $\H_\omega$, which extends $\lambda_\omega^0$.
For $a \in \A$, we define
$$ \pi^\BB_\omega(a) \lambda^\BB_{\omega} (x):= \lambda^\BB_{\omega} (ax), \quad x \in \BB.$$
Since,
\begin{eqnarray*}\ip{\lambda^\BB_{\omega}(y)}{\pi^\BB_\omega
(a)\lambda^\BB_{\omega}(x)}&=&\ip{\lambda^\BB_{\omega}(y)}{\lambda^\BB_{\omega}(ax)}
=(ax)^\omega (\lambda^\BB_{\omega}(y))\\ &=&
\omega((x^*a^*)y)=\overline{\omega((y^*a)x)},\quad \forall y \in \BB,
\end{eqnarray*}
it follows from (R3) that $\pi^\BB_\omega(a)$ is well-defined and
maps $\lambda^\BB_{\omega}(\BB)$ into $\H_\omega$. In similar way one
can show the equality
$$\ip{\pi^\BB_\omega
(a^*)\lambda^\BB_{\omega}(y)}{\lambda^\BB_{\omega}(x)}=\overline{\omega((y^*a)x)},
\quad  \forall x, y \in \BB.$$ This implies that $\pi^\BB(a) \in
{\mathcal L}\ad(\lambda^\BB_{\omega}(\BB),\H_\omega)$ and
$\pi^\BB_\omega(a)\ad = \pi^\BB_\omega(a^*)$. \\

Using the semi-associativity of $\A$, in particular  \eqref{assoc}, we also get, for $a,b \in \A$ with $a\in L(b)$,
the equality
$$\ip{ \pi^\BB_\omega(ab)\lambda^\BB_{\omega}(x)}{\lambda^\BB_{\omega}(y)}=
\ip{\pi^\BB_\omega(b)\lambda^\BB_{\omega}(x)}{\pi^\BB_\omega(a^*)\lambda^\BB_{\omega}(y)},
\quad \forall x,y \in \BB.$$ This implies that $\pi^\BB_\omega(a)\mult
\pi^\BB_\omega(b)$ is well-defined and
$$\pi^\BB_\omega(ab)=\pi^\BB_\omega(a)\mult \pi^\BB_\omega(b), \quad \forall a, b
\in \A, a\in L(b).$$ Thus, $\pi^\BB_\omega$ is a *-representation. It is clear that, if $\BB$ contains the unit $e$ of $\A$, then $\pi^\BB_\omega(\BB)\xi_\omega$,
$\xi_\omega:=\lambda^\BB_{\omega}(e)$, is dense in $\H_\vp$. \end{proof}

\berem In general, $\lambda^\BB(\A) \varsubsetneq \D_{\widetilde{\pi}^\BB_\omega}$, the domain of the closure $\widetilde{\pi}^\BB_\omega$ of $\pi^\BB_\omega$, but it is contained in
$ \D_{{(\pi^\BB_{\omega}\upharpoonright \BB)}^*}:=\displaystyle \bigcap_{x\in \BB}D({\pi^\BB_{\omega}(x)}^*)$ and ${\pi^\BB_{\omega}}^*(x)\lambda^\BB_\omega(a)=\lambda^\BB_\omega(xa)$ for every $a\in \A,\,x \in \BB$.
\enrem

The representation $\pi^\BB_\omega$ depends on the choice of the subspace $\BB\subset \R\A$. Thus, it makes sense to compare representations defined by different subspaces of $R\A$.

\beprop Let $\omega$ be a linear functional on $\A$ and $\BB_1,\, \BB_2$ two subspaces of $R\A$ satisfying (R1), (R2), (R3). Suppose that $\BB_1 \subset \BB_2$. Then the *-representation $\pi^{\BB_1}_\omega$ can be regarded as a *-subrepresentation of $\pi^{\BB_2}_\omega$, in the sense that there exists an isometry $U$ of $\H^{\BB_1}_\omega$ into $\H^{\BB_2}_\omega$ such that $U\lambda^{\BB_1}_\omega (x)= \lambda^{\BB_2}_\omega (x)$, for every $x \in \BB_1$, and
$U^*\pi^{\BB_2}_\omega(a)U \lambda^{\BB_1}_\omega (x)= \pi^{\BB_1}_\omega(a)\lambda^{\BB_1}_\omega (x)$, for every $a \in \A$ and $x \in \BB_1$.
\enprop

\becor Suppose that $R\A$ satisfies (R1), (R2), (R3). Then, for every subspace $\BB\subset R\A$, $\pi^{\BB}_\omega$ is a *-subrepresentation of $\pi^{R\A}_\omega$.
\encor

If $R\A$ satisfies (R1), (R2), (R3),  we write $(\pi_\omega, \lambda_\omega, \H_\omega)$, instead of \linebreak $(\pi^{R\A}_\omega, \lambda^{R\A}_\omega, \H^{R\A}_\omega)$, for the corresponding GNS construction.

\medskip Theorem \ref{GNS} motivates the following definition:
\bedefi Let $\A$ be a partial *-algebra. A linear functional $\omega$  on $\A$ is called {\em representable} if there exists a subspace $\BB$ of $R\A$, with $\BB\neq \{0\}$ and $\BB\neq {\mb C}e$, if $\A$ has a unit $e$, such that the conditions (R1)--(R3) are satisfied.
\findefi

\section{Positive sesquilinear forms with pre-core}\label{sect_biwips}
In \cite{ait_book} positive sesquilinear forms possessing a {\em core} were considered. They were called {\em biweights} and it was shown that a GNS construction can be performed for them. In this section we define the more general notion of pre-core for positive sesquilinear forms and construct the corresponding GNS representation. Before going forth, we review some definitions.

Let $\A$ be a partial *-algebra. Let $\varphi$ be a positive sesquilinear form on
$\D(\varphi) \times \D(\varphi)$, where $\D(\varphi)$ is a nontrivial subspace of $\A$ (i.e., $\D(\varphi)\neq \{0\}$ and $\D(\varphi) \neq {\mb C}e$, if $\A$ has a unit $e$).
 Then we have
\begin{align}
\varphi(x,y) &= \overline{\varphi(y,x)}, \ \ \ \forall \, x, y \in
\D(\varphi)
\\
 |\varphi(x,y)|^2 &\leqslant \varphi(x,x) \varphi(y,y), \ \ \
\forall \, x, y \in \D(\varphi). \label{2.2}
\end{align}
We put
\[
N_\varphi= \{ x \in \D(\varphi) ; \varphi(x,x)=0\}. \] By
\eqref{2.2}, we have
\[
N_\varphi= \{ x \in \D(\varphi) : \varphi(x,y)=0, \ \ \ \forall \,
y \in \D(\varphi) \},
\]
and so $N_\varphi$ is a subspace of $\D(\varphi)$ and the quotient
space $\D(\varphi) / N_\varphi \equiv \{ \lambda_\varphi(x) \equiv
x + N_\varphi ; x \in \D(\varphi) \}$ is a pre-Hilbert space with
respect to the inner product $(\lambda_\varphi(x) |
\lambda_\varphi(y)) = \varphi(x, y), x,y \in \D(\varphi)$. We
denote by $\H_\varphi$ the Hilbert space obtained by the
completion of $\D(\varphi) / N_\varphi$.

\bedefi\label{defn-biw} Let $\varphi$ be a positive sesquilinear
form on $\D(\varphi) \times \D(\varphi)$. A subspace $B(\varphi) $
of $\D(\varphi)$ is said to be a {\it core} for $\varphi$ if

(i) $B(\varphi) \subset R\A$ ;

(ii) $\{ ax ; a \in \A, x \in B(\varphi) \} \subset \D(\varphi)$ ;

(iii) $\lambda_\varphi(B(\varphi))$ is dense in $\H_\varphi$;

(iv) $\varphi(ax, y) = \varphi(x, a\x y), \ \ \forall \, a \in \A,
\forall \, x,y \in B(\varphi)$;

(v) $\varphi(a\x x, by) = \varphi(x, (ab)y), \ \ \forall \, a \in
L(b), \forall \, x,y \in B(\varphi)$. \findefi

\noindent We denote by ${\mc B}_\varphi$ the set of all cores
$B(\vp)$ for $\vp$.

\bedefi
A positive sesquilinear form $\varphi$ on $\D(\varphi)
\times \D(\varphi)$ such that ${\mc B}_\varphi \neq \emptyset$ is
called a {\it biweight} on $\A$.
\\ If $\D(\varphi)=\A$ we call $\vp$ an {\em ips form} (ips stands for invariant positive sesquilinear) on $\A$.
\findefi

To every biweight $\vp$ on $\A$, with core $B(\varphi) $, there corresponds \cite{ait_book}
 a triple $(\pi_\vp^B, \lambda_\varphi, \H_\vp)$, called the {\it GNS
construction} for the biweight $\vp$ on $\A$ with the core $B(\varphi) $, where $\H_\vp$ is a Hilbert space,
$\lambda_\vp$ is a linear map from $B(\varphi) $ into $\H_\vp$, such that $\lambda_\vp(B(\vp))$ is dense in $\H_\vp$, and $\pi_\vp^B$ is a *-representation on $\A$ in the
Hilbert space $\H_\vp$. \\ The representation
$\pi_\vp^B$ is then the closure of the representation $\pi_\vp^\circ$ defined on $\lambda_\vp(B(\vp))$ by
\begin{equation}\label{pizero}
\pi_\vp^\circ(a) \lambda_\vp(x) =\lambda_\vp(ax), \quad a \in \A, x\in B(\vp).
\end{equation}

\medskip

Let hereafter  $\A$ be a semi-associative \pa\ with unit $e$. If $\vp$ is an ips form with core $B(\vp)$, with $B(\vp)$ a subspace of $R\A$ containing the unit $e$ of $\A$, then the linear
functional $\omega_\vp$, with $\omega_\vp(a)=\vp(a,e)$, $a \in \A$, satisfies $B(\vp)\neq {\mb C}e$ and the  conditions (R1),(R2) and
(R3); i.e., it is representable. Thus Theorem \ref{GNS} can be applied to get the *-representation
$\widetilde{\pi}^{B(\vp)}_{\omega_\vp}$ (denoted, hereafter, simply by $\widetilde{\pi}^{B}_{\omega_\vp}$) constructed as shown above. On the other hand, we can also build up, as described above, the closed *-representation $\pi_\vp^B$, with cyclic vector $\xi_\vp:=\lambda_\vp(e)$. Since
$$\omega_\vp(a)=\vp(a,e)=\ip{ \pi_\vp(a)\xi_\vp}{\xi_\vp}, \quad
\forall a\in \A,$$ it turns out that
$\widetilde{\pi}^B_{\omega_\vp}$ and $\pi^B_\vp$ are unitarily
equivalent.

\medskip We now define the notion of pre-core for a positive sesquilinear form on $\A$. This notion is weaker than that of core for a biweight. We then investigate positive sesquilinear forms having a pre-core.

\medskip \bedefi Let $\A$ be a partial *-algebra and $\D(\vp)$ a nontrivial subspace of $\A$. Let  $\vp$ be a positive sesquilinear
form on $\D(\varphi) \times \D(\varphi)$. If  a nontrivial subspace $B(\vp)$ of $\D(\vp)$ satisfies the conditions (i), (ii), (iv) and (v) of Definition \ref{defn-biw}, then it is said to be a {\em pre-core} for $\vp$.\findefi

\betheo\label{lemma_44} Let $\A$ be a semi-associative \pa\ with unit $e$ and $\vp$ a positive sesquilinear form on $\A \times \A$ with a pre-core $B(\vp)$ containing the unit $e$ of $\A$. Then the following statements hold.
\begin{itemize}
\item[(i)]
The linear functional
$\omega_\vp$, with $\omega_\vp(a)=\vp(a,e)$, $a \in \A$, is representable for $B(\vp)$.
\item[(ii)] Put $$\Omega^B_\vp(a,b)=\ip{\widetilde{\pi}^B_{\omega_\vp}(a)\xi_{\omega_\vp}}{\widetilde{\pi}^B_{\omega_\vp}(b)\xi_{\omega_\vp}}, \quad
a,b \in \A,$$
where $\widetilde{\pi}^B_{\omega_\vp}$ is the *-representation of $\A$ defined by $\omega_\vp$ and $\xi_{\omega_\vp}=\lambda^B_{\omega_\vp}(e)$ the corresponding cyclic vector. Then $\Omega^B_\vp$ is an ips form on $\A$ with core $B(\vp)$.

\end{itemize}
\entheo

\begin{proof} (i): This is trivial. \\
(ii): Since $\H^B_{\omega_\vp}= \overline{\lambda^B_{\omega_\vp}(\A)}=\overline{\lambda^B_{\omega_\vp}(B(\vp))}$ and $\widetilde{\pi}^B_{{\omega_\vp}}$ is unitarily equivalent to $\pi^B_{\Omega_\vp}$, it follows that $\H^B_{\Omega_\vp}=\overline{\lambda^B_{\Omega^B_\vp}(B(\vp))}$. Hence, $\Omega^B_\vp$ is an ips form with core $B(\vp)$.

\end{proof}

\begin{lemma} \label{lemma 3.5} The following equality holds
$$\Omega^B_\vp(ax,by) = \vp(ax,by), \quad \forall a,b \in \A \mbox{ such that } a^* \in L(b),\, \forall x,y \in B(\vp).$$
\end{lemma}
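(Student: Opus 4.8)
The plan is to push everything into the Hilbert space $\H^B_{\omega_\vp}$ built in Theorem \ref{GNS} and to exploit that \emph{both} $\vp$ and $\Omega^B_\vp$ admit $B(\vp)$ as a (pre-)core, so that both satisfy the invariance identities (iv) and (v) of Definition \ref{defn-biw}. Write $\omega:=\omega_\vp$ for brevity. First I would record the identity $\widetilde{\pi}^B_{\omega}(w)\xi_{\omega}=\lambda^B_{\omega}(w)$ for every $w\in\A$: since $e\in B(\vp)$ we have $\xi_{\omega}=\lambda^B_{\omega}(e)=\lambda^0_{\omega}(e)\in\D_{\pi^B_\omega}$, and $we=w$ together with part (b) of Theorem \ref{GNS} gives $\widetilde{\pi}^B_{\omega}(w)\xi_{\omega}=\pi^B_\omega(w)\lambda^0_{\omega}(e)=\lambda^B_{\omega}(we)=\lambda^B_{\omega}(w)$. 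Consequently $\Omega^B_\vp(w,w')=\ip{\lambda^B_{\omega}(w)}{\lambda^B_{\omega}(w')}$ for all $w,w'\in\A$.

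The heart of the argument is the bridging identity $\Omega^B_\vp(w,z)=\vp(w,z)$ for all $w\in\A$ and $z\in B(\vp)$. To prove it I would compute, using $z\in B(\vp)$ (so that $\lambda^B_\omega(z)=\lambda^0_\omega(z)$) and the defining relation \eqref{eqn_defn_lambda},
\[ \Omega^B_\vp(w,z)=\ip{\lambda^B_{\omega}(w)}{\lambda^0_{\omega}(z)}=\overline{\ip{\lambda^0_{\omega}(z)}{\lambda^B_{\omega}(w)}}=\overline{\omega(w^* z)}=\overline{\vp(w^* z,e)}. \]
Here $w^* z$ is a legitimate partial product because $z\in B(\vp)\subset R\A$. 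Applying invariance (iv) for $\vp$ with the multiplier $w^*\in\A$ and $z,e\in B(\vp)$ yields $\vp(w^* z,e)=\vp(z,(w^*)^* e)=\vp(z,w)$, whence $\Omega^B_\vp(w,z)=\overline{\vp(z,w)}=\vp(w,z)$ by Hermiticity. Taking complex conjugates also gives $\Omega^B_\vp(z,w)=\vp(z,w)$ whenever $z\in B(\vp)$ and $w\in\A$.

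It then remains to reduce the general case to this bridging identity. Given $a,b\in\A$ with $a^*\in L(b)$ and $x,y\in B(\vp)$, condition (v) of Definition \ref{defn-biw}, with the dummy multiplier instantiated at $a^*\in L(b)$, holds for $\vp$ because $B(\vp)$ is a pre-core, and for $\Omega^B_\vp$ because by Theorem \ref{lemma_44}(ii) it is an ips form with core $B(\vp)$; hence
\[ \Omega^B_\vp(ax,by)=\Omega^B_\vp(x,(a^*b)y), \qquad \vp(ax,by)=\vp(x,(a^*b)y). \]
Since $x\in B(\vp)$ and $(a^*b)y\in\A$, the conjugate form of the bridging identity gives $\Omega^B_\vp(x,(a^*b)y)=\vp(x,(a^*b)y)$, and the claim follows by combining the three equalities.

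The main obstacle I anticipate is the bridging identity itself: one cannot collapse \emph{both} arguments into $B(\vp)$ using only (iv)--(v), so the detour through the Riesz vectors $\lambda^B_\omega(w)$ and the functional $\omega_\vp$ is essential. The delicate point is to verify that every partial product occurring along the way---notably $w^* z$ and $(a^*b)y$---is actually defined, which rests precisely on $B(\vp)\subset R\A$ and $e\in B(\vp)\subset R\A$, while keeping careful track of the conjugations introduced by (iv) and by Hermiticity.
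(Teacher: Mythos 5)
Your proposal is correct and follows essentially the same route as the paper: establish the bridging identity $\Omega^B_\vp(w,z)=\vp(w,z)$ for $z\in B(\vp)$ by unwinding the GNS construction to reduce to $\omega_\vp(w^*z)=\vp(w^*z,e)$ and then invoking invariance (iv), and conclude by applying (v) to both $\vp$ (pre-core) and $\Omega^B_\vp$ (ips form with core $B(\vp)$) to collapse $\Omega^B_\vp(ax,by)$ and $\vp(ax,by)$ onto $\vp(x,(a^*b)y)$. If anything, your version of the bridging step is slightly more careful than the paper's, since you apply (iv) with the multiplier $w^*$ and both arguments $z,e$ genuinely in $B(\vp)$, rather than to $\vp(y^*(ax),e)$ where the second slot of (iv) would hold an element outside $B(\vp)$.
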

\begin{proof} First, we have:
$$ \Omega^B_\vp (ax, y) = \vp(ax,y), \quad \forall a \in \A, \forall x,y \in B(\vp).$$
Indeed,
$$
\Omega^B_\vp (ax, y)
= \omega_\vp (y^*(ax))=\vp(y^*(ax), e)=\vp(ax,y).
$$
By (v) of Definition \ref{defn-biw}, if $a^*\cdot b$ is well-defined, we get
$$ \Omega^B_\vp (ax,b y)=\Omega^B_\vp (x,(a^*b) y)= \vp(x, (a^*b)y)= \vp(ax,by).$$
\end{proof}

\berem
(1): Suppose that $B(\vp)$ is not a core. Then a *-representation like $\pi^B_\vp$ cannot be defined directly through $\vp$, but the *-representation $\pi^B_{\omega_\vp}$ can be defined by means of the representable linear functional $\omega_\vp$.\\
(2): Let $B(\vp)$ be a pre-core for $\vp$, containing $e$.
If $B(\vp)$ is a core for $\vp$, then $\pi^B_\vp$ and $\pi^B_{\Omega_\vp}$ are unitarily equivalent and $\Omega^B_\vp= \vp$.
Suppose, on the contrary, that $B(\vp)$ is not a core for $\vp$. Then, though $\omega_\vp(a)=\vp(a,e)=\Omega^B_\vp(a,e)$, for every $a\in \A$, $\Omega^B_\vp\neq \vp$ and
\begin{equation} \label{5}\H_{\Omega^B_\vp}\cong \overline{\lambda_\vp(B(\vp))} \varsubsetneqq \H_\vp. \end{equation} For an explicit example, see \cite[Example 2.3]{ct_rep}
\enrem
\section{A decomposition theorem}\label{sect_decom}
In this section we define the notion of singularity of a positive sesquilinear form with pre-core and show that every positive sesquilinear form with pre-core can be decomposed into the sum of an ips form and of a singular form.
We first give a necessary and sufficient condition for a pre-core $B(\vp)$ to be a core.

\begin{prop}\label{3.7} Let $\A$ be be a \pa, $\vp$ a positive sesquilinear form on $\A\times \A$ and $B(\vp)$ a pre-core for $\vp$.
The following statements are equivalent.
\begin{itemize}
                                          \item[(i)]  $B(\vp)$ is a core for $\vp$
                                          \item[(ii)] $\lambda_\vp(B(\vp))$ is dense in $\H_\vp$.
                                          \item[(iii)]If $\{a_n\}$ is a sequence of elements of $\A$ such that:
                                          \begin{itemize}
                                            \item[(iii.a)]$\vp(a_n,x)\to 0$, as $n \to\infty$, for every $x \in B(\vp)$;
                                            \item[(iii.b)]$\vp(a_n-a_m,a_n-a_m)\to 0$, as $n,m \to\infty$;
                                          \end{itemize}
                                            then,  $\displaystyle \lim_{n\to \infty}\vp(a_n,a_n) =0$.
                                        \end{itemize}

\end{prop}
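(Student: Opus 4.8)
**The plan is to prove the three equivalences by establishing $(i)\Leftrightarrow(ii)$ as essentially a matter of unwinding definitions, and then closing the loop with $(ii)\Leftrightarrow(iii)$, where the real content lies.** The equivalence of $(i)$ and $(ii)$ should be nearly immediate: a pre-core already satisfies conditions (i), (ii), (iv), (v) of Definition \ref{defn-biw} by hypothesis, so being a core adds exactly condition (iii), which is the density of $\lambda_\vp(B(\vp))$ in $\H_\vp$. Thus $(i)$ and $(ii)$ are definitionally the same statement, and I would dispose of this in one sentence.

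The substantive work is $(ii)\Leftrightarrow(iii)$, and the natural strategy is to recognize condition $(iii)$ as the standard closability-type criterion that detects whether the closure of $\lambda_\vp(B(\vp))$ exhausts $\H_\vp$. I would phrase it in terms of the orthogonal complement: $\lambda_\vp(B(\vp))$ fails to be dense precisely when there is a nonzero vector in $\H_\vp$ orthogonal to $\lambda_\vp(B(\vp))$, equivalently a nonzero element $\eta$ in the closure of $\lambda_\vp(\A)$ lying in $\overline{\lambda_\vp(B(\vp))}^{\,\perp}$. To connect this to sequences $\{a_n\}$, the idea is that any candidate limit vector $\eta \in \H_\vp$ can be approximated by $\lambda_\vp(a_n)$ for some sequence $a_n \in \A$ (since $\lambda_\vp(\A)$ is dense in $\H_\vp$ by construction of the GNS space). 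Condition (iii.b) says the sequence $\{\lambda_\vp(a_n)\}$ is Cauchy in $\H_\vp$, hence converges to some $\eta$; condition (iii.a) says $\eta$ is orthogonal to every $\lambda_\vp(x)$, $x\in B(\vp)$, i.e. $\eta \perp \lambda_\vp(B(\vp))$; and the conclusion $\vp(a_n,a_n)\to 0$ says $\|\eta\|=0$, i.e. $\eta=0$.

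\textbf{For the forward direction} $(ii)\Rightarrow(iii)$: assuming $\lambda_\vp(B(\vp))$ is dense, I would take a sequence satisfying (iii.a) and (iii.b). By (iii.b) the vectors $\lambda_\vp(a_n)$ form a Cauchy sequence, so $\lambda_\vp(a_n)\to\eta$ for some $\eta\in\H_\vp$. By (iii.a), for each $x\in B(\vp)$ we have $(\lambda_\vp(a_n)\mid\lambda_\vp(x)) = \vp(a_n,x)\to 0$, so $\eta\perp\lambda_\vp(B(\vp))$; density then forces $\eta=0$, whence $\vp(a_n,a_n)=\|\lambda_\vp(a_n)\|^2\to\|\eta\|^2=0$. \textbf{For the converse} $(iii)\Rightarrow(ii)$, I would argue by contraposition: if $\lambda_\vp(B(\vp))$ is not dense, pick a nonzero $\eta$ in its orthogonal complement inside $\H_\vp$, and use density of $\lambda_\vp(\A)$ to choose $a_n$ with $\lambda_\vp(a_n)\to\eta$; this sequence satisfies (iii.a) and (iii.b) but $\vp(a_n,a_n)\to\|\eta\|^2>0$, violating (iii).

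\textbf{The main obstacle} I anticipate is justifying that $\lambda_\vp(\A)$ — not merely $\lambda_\vp(B(\vp))$ — is dense in $\H_\vp$, so that arbitrary orthogonal-complement vectors are reachable by sequences from $\A$. This is needed in the converse direction to produce the approximating sequence $\{a_n\}$, and it rests on the fact that $\H_\vp$ was defined as the completion of $\D(\vp)/N_\vp = \A/N_\vp$ (here $\D(\vp)=\A$), so $\lambda_\vp(\A)$ is dense by construction; I would make this explicit. A secondary technical point is the innocuous but necessary identity $\vp(a_n,x)=(\lambda_\vp(a_n)\mid\lambda_\vp(x))$ and $\vp(a_n,a_n)=\|\lambda_\vp(a_n)\|^2$, which reduce all three conditions to statements purely about the vectors $\lambda_\vp(a_n)$ in the Hilbert space $\H_\vp$, after which the argument becomes standard Hilbert-space reasoning.
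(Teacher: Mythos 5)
Your proposal is correct and takes essentially the same route as the paper's own proof: (i)$\Leftrightarrow$(ii) by definition, (ii)$\Rightarrow$(iii) via the Cauchy limit $\eta$ of $\{\lambda_\vp(a_n)\}$ being orthogonal to the dense subspace $\lambda_\vp(B(\vp))$, and (iii)$\Rightarrow$(ii) by approximating a vector orthogonal to $\lambda_\vp(B(\vp))$ with a sequence $\lambda_\vp(a_n)$ — the paper argues this last step directly rather than contrapositively, but the reasoning is identical. Your explicit justification that $\lambda_\vp(\A)$ is dense in $\H_\vp$ (since $\H_\vp$ is by construction the completion of $\A/N_\vp$, as $\D(\vp)=\A$ here) makes precise a point the paper uses tacitly.
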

\begin{proof} (i) $\Leftrightarrow$ (ii): This follows from the definition.

(ii) $\Rightarrow$ (iii): Let $\{a_n\}$ be a sequence of elements of $\A$ for which (iii.a) and (iii.b) hold. By (iii.b) the sequence $\{\lambda_\vp(a_n)\}$ is Cauchy in $\H_\vp$. Let $\xi$ be its limit. By (iii.a) it follows that
$$ \ip{\xi}{\lambda_\vp(x)}=\lim_{n\to \infty} \vp(a_n,x)= 0, \quad \forall x \in B(\vp).$$
Hence, $\xi$ is orthogonal to $\lambda_\vp(B(\vp))$. This implies that $\xi=0$ and, therefore,
$$ \lim_{n\to \infty} \vp(a_n,a_n)=\|\xi\|^2=0.$$
(iii) $\Rightarrow$ (ii): Let $\xi\in \H_\vp$ be a vector orthogonal to $\lambda_\vp(B(\vp))$ and $\{a_n\}$ a sequence in $\A$ such that $\lambda_\vp(a_n)\to \xi$. Then, it is easily seen that $\{a_n\}$ satisfies (iii.a) and (iii.b). Then,
$$ \|\xi\|^2= \lim_{n\to \infty} \vp(a_n,a_n)=0.$$ This proves that $\lambda_\vp(B(\vp))$ is dense in $\H_\vp$.

\end{proof}

The above statements suggest the following definition of {\em singularity} of a positive sesquilinear form.

\bedefi Let $\A$ be be a \pa\ and
$\psi$ a positive sesquilinear form on $\A\times \A$ with pre-core $B(\psi)$. We say that $\psi$ is $B(\psi)$-{\em singular} if there exists $a_0 \in \A$ with $\psi(a_0,a_0)>0$ and $\psi(a,x)=0$, for every $a\in \A$, $x\in B(\psi)$.

\findefi

\betheo \label{3.9} Let $\A$ be a semi-associative \pa\ with unit $e$ and
$\vp$ a positive sesquilinear form on $\A\times \A$ with pre-core $B(\vp)$. Then, there exist an ips form $\vp_0$, with core $B(\vp)$, which coincides with $\vp$ on all pairs $(a,b)$ with $a^*\in L(b)$, and a  $B(\vp)$-singular form $s_\vp$ such that
$$ \vp(a,b)=\vp_0(a,b)+s_\vp(a,b), \quad \forall a,b \in \A.$$
\entheo
\begin{proof} Put $\vp_0=\Omega^B_\vp$ and $s_\vp= \vp - \Omega^B_\vp$. Then, by Lemma \ref{lemma_44}, $\vp_0$ is an ips form with core $B(\vp)$ and by Lemma \ref{lemma 3.5}, $s_\vp(a,x)=0$, for every $a\in \A$, $x\in B(\vp)$.\\
Now we prove that $\Omega^B_\vp (a,a) \leq \vp(a,a)$, for every $a \in \A$. Indeed we have
$$ \Omega^B_\vp(a,a) = \|\widetilde{\pi}^B_{\omega_\vp}(a)\lambda^B_{\omega_\vp}(e)\|^2= \|\lambda^B_{\omega_\vp}(a)\|^2.$$
Now, we notice that, by the construction in Theorem \ref{GNS}, (see, in particular \eqref{eqn_defn_lambda}, \eqref{eqn_defn_lambda2}\,),
\begin{eqnarray*}\|\lambda^B_{\omega_\vp}(a)\|&=& \sup\{|\omega_\vp(a^*x)|; \omega_\vp(x^*x)=1\}\\ &=& \sup\{|\vp(a,x)||; \omega_\vp(x^*x)=1\}\leq \vp(a,a)^{1/2}.\end{eqnarray*}
Hence, $s_\vp$ is a positive sesquilinear form on $\A\times \A$ and $B(\vp)$ is a pre-core also for $s_\vp$. If $B(\vp)$ is not a core for $\vp$, then there exists a sequence $\{a_n\}$ of elements of $\A$ with the properties
(a): $\vp(a_n,x)\to 0$, as $n \to\infty$, for every $x \in B(\vp)$; (b): $\vp(a_n-a_m,a_n-a_m)\to 0$, as $n,m \to\infty$; (c): $\lim_{n\to \infty}\vp(a_n,a_n)= \alpha>0$. Since $B(\vp)$ is a core for $\Omega^B_\vp$ and
$$\Omega^B_\vp((a_n-a_m,a_n-a_m)\leq \vp(a_n-a_m,a_n-a_m) \to 0,$$ we have $\lim_{n\to \infty}\Omega^B_\vp(a_n,a_n)=0$.
In conclusion,  we have $ s_\vp(a_n,a_n)\to \alpha>0$. So that $s_\vp$ cannot be identically $0$. This, clearly, implies that $s_\vp$ is $B(\vp)$-singular.
\end{proof}

\berem  Proposition 2.2 of \cite{ct_rep} was stated in incorrect way. The right version can be recovered specializing to the case of quasi *-algebras  the above Proposition \ref{3.7}.
\enrem

\section{Quasi-regular *-representations}\label{sect_quasireg}
Let $\A$ be a \pa, with unit $e$, such that $R\A \varsupsetneqq {\mb C}e$. Let  $\pi$ be a *-representation of $\A$
into $ \L\ad(\D,\H)$ and $\xi \in \D$. Put
$$\vp_\xi^\pi(a,b) = \ip{\pi(a) \xi} {\pi(b)\xi}, \quad a,b \in \A.$$
Then $\vp_\xi^\pi$ is a positive sesquilinear form on $\A\times \A$  and the subspace
$$ B_0(\vp_\xi^\pi)=\{ x \in R\A:\, \pi(x)\xi \in {\D}(\pi)\} $$ is a pre-core  for $\vp_\xi^\pi$, containing $e$ and it is the largest member in the set of all pre-cores for $\vp_\xi^\pi$.
Thus, if $\pi(B_0(\vp_\xi^\pi))\xi$ is dense in $\overline{\pi(\A)\xi}$, then $\vp_\xi^\pi$ is an ips form on $\A$.

More in general, a vector form related to $\pi$ can be defined for every $\xi\in \H$.
This is done by defining
\begin{align*}
\D(\vp_\xi^\pi)&=\{a \in \A: \xi \in \D(\pi(a\ha)\ha)\} \\
\vp_\xi^\pi(a,b)& = \ip{\pi(a\ha)\ha \xi} {\pi(b\ha)\ha\xi}.
\end{align*}
Then $\vp_\xi^\pi$ is a positive sesquilinear form on
$\D(\vp_\xi^\pi)\times \D(\vp_\xi^\pi)$.

If $\xi \in \H \setminus \D$ we define $B_0(\vp_\xi^\pi)$ as the
linear span of the set
$$ B_{00}(\vp_\xi^\pi)= \{ x \in R\A:\, \xi \in \D(\overline{\pi(x)}),
\,\overline{\pi(x)}\xi \in \D\}.$$ Then, $B_0(\vp_\xi^\pi)$ is a pre-core  for $\vp_\xi^\pi$, that does not contain $e$. If, in addition,
$\pi(B_0(\vp_\xi^\pi))\xi$ is dense in
$\overline{\pi(D(\vp_\xi^\pi))\xi}$, $\vp_\xi^\pi$ is a biweight
on $\A$ (see \cite[Example
9.1.12]{ait_book}).

 The previous discussion shows that the set of pre-cores for $\vp_\xi^\pi$ is not empty, for every $\xi\in \H$. The existence of cores for $\vp_\xi^\pi$ remains an open question.

 The notion of {\em regular *-representation} has been given in \cite{anttratschi_2, ct_rep}. We weaken it a little to get a notion more suitable for our purposes.
 \bedefi A *-representation $\pi$ of a partial
*-algebra $\A$  is said to be
\begin{itemize}
  \item {\em regular}, if
$\vp_\xi^\pi$ is a biweight on $\A$, for every $\xi \in \H$;
  \item {\em quasi regular}, if
$\vp_\xi^\pi$ is an ips form on $\A$, for every $\xi \in \D$.
\end{itemize}
\findefi

\beprop \label{CNSregularity}Let $\pi$ be a *-representation of
$\A$. The following statements are equivalent.
\begin{itemize}
\item[(i)] $\pi$ is quasi regular.
\item[(ii)] There exists a pre-core $B(\vp)$ such that $\pi(a)\xi \in \overline{\pi(B(\vp))\xi}$, for every $a \in
\A$ and for every $\xi \in \Dr$.
\item[(iii)] There exists a pre-core $B(\vp)$ such that for every $\xi \in \Dr$,
$\pi\ze:=\pi_{\upharpoonright{\mc M}_\xi}$ is a *-representation
of $\A$ into ${\mc L}^\dag ({\mc M}_\xi, \overline{{\mc
M}_\xi})$, where ${\mc M}_\xi=\pi(B(\vp))\xi$.
\end{itemize}\enprop

\begin{proof} (i)$\Rightarrow$(ii): Let $\pi$ be quasi regular and
$\xi\in \Dr$. Let us consider the vector form $\vp_\xi^\pi$. Then, for
every $a \in \A$, there exists a sequence $\{x_n\} \subset B_0(\vp_\xi^\pi)$
such that $\|\lambda_{\vp_\xi^\pi}(a-x_n)\|\to 0$. Then we have:
$$ \|(\pi(a)-\pi(x_n))\xi\|^2 = \|\lambda_{\vp_\xi^\pi}(a-x_n)\|^2 \to
0.$$ This proves that $\pi(a)\xi \in \overline{\pi(B_0(\vp_\xi^\pi))\xi}$.

\noindent (ii)$\Rightarrow$(iii): The assumption implies that, for
every $a \in \A$ and $\xi \in \Dr$, $\pi\ze(a)$ maps $\pi(B(\vp))\xi$
into $\overline{\pi(B(\vp))\xi}$. Some simple calculations, that make
use of conditions (iv) and (v) of Definition \ref{defn-biw},  show that
$\pi\ze(a^*)=(\pi\ze(a))^*_{\upharpoonright\pi(B(\vp))\xi}$ and that
$\pi\ze$ preserves the partial multiplication of $\A$.

\noindent (iii)$\Rightarrow$(i): The assumption implies that, for
every $\xi \in \Dr$ and $a \in \A$, $\pi(a)\xi \in \overline{{\mc
M}_\xi}$. Therefore, for every $a \in \A$, there exists a sequence
$\{x_n\} \subset B(\vp)$ such that
$\mbox{$\|(\pi(a)-\pi(x_n))\xi\|\to 0$}$. Then, for $\vp_\xi^\pi$, we
have:
$$\vp_\xi^\pi(a-x_n,a-x_n) =\|\lambda_{\vp_\xi^\pi}(a-x_n)\|^2=\|(\pi(a)-\pi(x_n))\xi\|^2 \to
0.$$ Hence, $\pi$ is quasi regular.
\end{proof}

\medskip
If $\vp$ is a positive sesquilinear form on $\A \times \A$ and $B(\vp) \subset R\A$, then, for every $x \in B(\vp)$,  the sesquilinear form $\vp_x$ on $\A \times \A$
defined by \begin{equation}\label{eq_defvpa}\vp_x (a,b) = \vp(ax,bx), \quad a,b \in
\A,\end{equation}is a positive sesquilinear form on $\A \times \A$.
Now assume that $\A$ is semi-associative and that  $B(\vp)$ is a pre-core for $\vp$. If $B(\vp)$ is an algebra, then, for every $x \in B(\vp)$, $\vp_x$ admits $B(\vp)$ as a pre-core.
\medskip
\beprop \label{regularity} Let $\A$ be a  semi-associative partial *-algebra, $\vp$ an ips form with core $B(\vp)$ and $\pi_\vp^\circ $ the
*-representation defined in \eqref{pizero}. Suppose that $B(\vp)$ is an algebra. Then, the following statements
are equivalent:
\begin{itemize}
\item[(i)] $\pi_\vp^\circ $ is quasi regular.
\item[(ii)] $B(\vp)$ is a core for $\vp_x$, for every $x \in B(\vp)$.
\end{itemize} \enprop
\begin{proof}  If $\eta \in \D_\vp=\lambda_\vp(B(\vp))$, then $\eta= \lambda_\vp(x)$,
for some $x \in B(\vp)$.  Hence $$ \vp_\eta^{\pi_\vp^\circ}(a,b)=
\ip{\pi_\vp^\circ(a)\lambda_\vp(x)}{\pi_\vp^\circ(b)\lambda_\vp(x)}=\vp(ax,ax)=
\vp_x(a,b), \quad \forall a,b \in \A.
$$
Thus $\vp_\eta^{\pi_\vp^\circ}=\vp_x$. This equality clearly implies the
equivalence of (i) and (ii).
\end{proof}

\end{document}